\def\UseBibLatex{1}
\def\input@path{{styles/}}
\newcommand{\UsePackage}[1]{%
  \IfFileExists{styles/#1.sty}{%
      \usepackage{styles/#1}%
   }{%
      \IfFileExists{../styles/#1.sty}{%
         \usepackage{../styles/#1}%
      }{%
         \usepackage{#1}%
      }%
   }%
}
\theoremstyle{plain}%
\newtheorem{theorem}{Theorem}[section]
\newtheorem{lemma}[theorem]{Lemma}
\newtheorem{conjecture}[theorem]{Conjecture}
\newtheorem{corollary}[theorem]{Corollary}
\newtheorem{observation}[theorem]{Observation}
\newtheorem{proposition}[theorem]{Proposition}
\theoremstyle{plain}%
\newtheorem*{remark:unnumbered}[theorem]{Remark}%
\newcommand{\myqedsymbol}{\rule{2mm}{2mm}}
\theoremstyle{nonumberplain}%
\newtheorem{proof}{Proof:}%
\providecommand{\emphind}[1]{}%
\renewcommand{\emphind}[1]{\emph{#1}\index{#1}}
\definecolor{blue25emph}{rgb}{0, 0, 11}
\providecommand{\emphic}[2]{}
\renewcommand{\emphic}[2]{\textcolor{blue25emph}{%
      \textbf{\emph{#1}}}\index{#2}}
\providecommand{\emphi}[1]{}%
\renewcommand{\emphi}[1]{\emphic{#1}{#1}}
\definecolor{almostblack}{rgb}{0, 0, 0.3}
\providecommand{\emphw}[1]{}%
\renewcommand{\emphw}[1]{{\textcolor{almostblack}{\emph{#1}}}}%
\providecommand{\emphOnly}[1]{}%
\renewcommand{\emphOnly}[1]{\emph{\textcolor{blue25}{\textbf{#1}}}}
\newcommand{\HLink}[2]{\hyperref[#2]{#1~\ref*{#2}}}
\newcommand{\HLinkSuffix}[3]{\hyperref[#2]{#1\ref*{#2}{#3}}}
\providecommand{\deflab}[1]{}
\renewcommand{\deflab}[1]{\label{def:#1}}
\providecommand{\eqlab}[1]{}%
\renewcommand{\eqlab}[1]{\label{equation:#1}}
\newcommand{\remove}[1]{}%
\newcommand{\ceil}[1]{\mleft\lceil {#1} \mright\rceil}
\newcommand{\floor}[1]{\mleft\lfloor {#1} \mright\rfloor}
\newlist{compactenumA}{enumerate}{5}%
\setlist[compactenumA]{topsep=0pt,itemsep=-1ex,partopsep=1ex,parsep=1ex,%
   label=(\Alph*)}%
\newlist{compactenuma}{enumerate}{5}%
\setlist[compactenuma]{topsep=0pt,itemsep=-1ex,partopsep=1ex,parsep=1ex,%
   label=(\alph*)}%
\newlist{compactenumI}{enumerate}{5}%
\setlist[compactenumI]{topsep=0pt,itemsep=-1ex,partopsep=1ex,parsep=1ex,%
   label=(\Roman*)}%
\newlist{compactenumi}{enumerate}{5}%
\setlist[compactenumi]{topsep=0pt,itemsep=-1ex,partopsep=1ex,parsep=1ex,%
   label=(\roman*)}%
\newlist{compactitem}{itemize}{5}%
\setlist[compactitem]{topsep=0pt,itemsep=-1ex,partopsep=1ex,parsep=1ex,%
   label=\ensuremath{\bullet}}%
\providecommand{\BibLatexMode}[1]{}
\providecommand{\BibTexMode}[1]{}
  \renewcommand{\BibLatexMode}[1]{}
  \renewcommand{\BibTexMode}[1]{#1}
  \renewcommand{\BibLatexMode}[1]{#1}
  \renewcommand{\BibTexMode}[1]{}
\numberwithin{figure}{section}%
\numberwithin{table}{section}%
\numberwithin{equation}{section}%
\begin{document}

\title{Improved upper bounds for the Heilbronn's  Problem for $k$-gons}

\author{Rishikesh Gajjala\thanks{Indian Institute of Science,
		Bengaluru, {\tt rishikeshg@iisc.ac.in}}
	\and
	Jayanth Ravi}

\date{ }

\maketitle

\begin{abstract}
The Heilbronn triangle problem asks for the placement of $n$ points in a unit square that maximizes the smallest area of a triangle formed by any three of those points. In $1972$, Schmidt considered a natural generalization of this problem. He asked for the placement of $n$ points in a unit square that maximizes the smallest area of the convex hull formed by any four of those points. He showed a lower bound of $\Omega(n^{-3/2})$, which was improved to $\Omega(n^{-3/2}\log{n})$ by Leffman. 

A trivial upper bound of $3/n$ could be obtained and 
Schmidt asked if this can be improved asymptotically. However, despite several efforts, no asymptotic improvement over the trivial upper bound was known for the last $50$ years, and the problem started to get the tag of being notoriously hard. Szemer{\'e}di posed the question of whether one can, at least, improve the constant in this trivial upper bound. In this work, we answer this question by proving an upper bound of $2/n+o(1/n)$. We also extend our results to any convex hulls formed by $k\geq 4$ points.
\end{abstract}

\section{Introduction}
Given a constant $k \geq 3$ and a set $\mathcal{P}=\{P_1,P_2,\ldots,P_n\}$ of $n\geq k$ points on the unit square $[0,1]^2$, let $\mathcal{A}_{k}(P)$ be the area of the smallest convex hull among all convex hulls determined by subsets of $k$ points in $\mathcal{P}$. The supremum value of $\mathcal{A}_{k}(P)$ over all choices of $\mathcal{P}$ is denoted by $\Delta_k(n)$. The Heilbronn triangle problem asks for the value of $\Delta_3(n)$. 

The Heilbronn triangle problem is one of the fundamental problems in discrete geometry and discrepancy theory and has a rich history.  Paul Erdős proved that $\Delta_3(n)=\Omega\left(\frac{1}{n^2}\right)$. This was believed to be the upper bound for some time until Koml{\'o}s, Pintz and Szemer{\'e}di \cite{komlos1982lower} proved that $$\Delta_3(n)=\Omega\left(\frac{\log{n}}{n^2}\right)$$ 
Over a series of works, the upper bounds were improved by Roth \cite{roth1972problem, roth1973estimation, roth1976developments, roth1951problem} and Schmidt \cite{10.1112/jlms/s2-4.3.545}. The current best-known upper bound is due to Koml{\'o}s, Pintz and Szemer{\'e}di \cite{komlos1981heilbronn} $$\Delta_3(n)=\mathcal{O} \left(\frac{2^{c\sqrt{\log{n}}}}{n^{8/7}}\right)$$
This has been recently claimed to be improved by Cohen, Pohoata and Zakharov \cite{cohen2023new} to $\mathcal{O}\left(n^{-8/7-1/2000}\right)$.

There has also been work on several variants of this problem. Jiang, Li and Vitany \cite{DBLP:journals/rsa/JiangLV02} and Benevides, Hoppen, Lefmann and Odermann \cite{benevides2023heilbronn} studied the case in which the points were randomly distributed. The problem was also explored in higher dimensions by placing $n$ points in $d$-dimensional unit cubes $[0,1]^d$ instead of a unit square \cite{DBLP:journals/siamdm/Barequet01,DBLP:journals/dm/Barequet04,DBLP:journals/dcg/BarequetS07, DBLP:journals/siamdm/Brass05, DBLP:journals/combinatorica/Lefmann03, DBLP:journals/siamcomp/LefmannS02}. 

Schmidt asked about the value of $\Delta_k(n)$ and proved that $\Delta_4(n)=\Omega\left(\frac{1}{n^{1.5}}\right)$ \cite{10.1112/jlms/s2-4.3.545}. Bertraln-Kretzberg, Hofmeister and Lefmann generalized this result to $k$-gons by proving that $\Delta_k(n)=\Omega\left(\frac{1}{n^{\frac{k-1}{k-2}}}\right)$ \cite{DBLP:journals/siamcomp/Bertram-KretzbergHL00}. This was improved by Lefmann\cite{DBLP:journals/ejc/Lefmann08} to $$\Delta_k(n)=\Omega\left(\frac{{(\log{n})}^{1/k-2}}{n^{1+\frac{1}{k-2}}}\right)$$

\section{Our results}
A trivial upper bound of $\Delta_4(n) \leq \dfrac{3}{n}$ can be obtained by subdividing the unit square into squares of side length $\sqrt{\dfrac{3}{n}}$ using the pigeonhole principle. However, despite several efforts to improve this upper bound (asymptotically) since it was posed in $1972$, there has been no progress! Szemer{\'e}di asked if at least the constants in this upper bound can be improved \cite{szemeredi2022}. In this work, we answer this question by proving the following theorem. 

\begin{theorem}
\label{thm1}
$\Delta_4(n) \leq \dfrac{2}{n} + o\left(\dfrac{1}{n}\right)$    
\end{theorem}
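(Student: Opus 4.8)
The plan is to reduce the asymptotic statement to a single sharp constant-size fact and then amplify it by a partition-and-pigeonhole argument. The key lemma I would aim for is the exact small-case bound $\Delta_4(5)\le \tfrac12$: among any five points in the unit square, some four of them span a convex hull of area at most $\tfrac12$. Granting this, the theorem follows by a clean counting step. Partition $[0,1]^2$ into a $q\times q$ grid of axis-parallel subsquares with $q=\lfloor\sqrt{(n-1)/4}\rfloor$, so that there are $q^2\le (n-1)/4$ cells, each a scaled copy of the unit square of area $1/q^2=\tfrac4n(1+o(1))$. Since $4q^2\le n-1<n$, if every cell held at most four points the total would be below $n$, so some cell contains at least five of the points. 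Applying the lemma inside that cell (areas scale by the factor $1/q^2$) yields four points whose convex hull has area at most $\tfrac12\cdot\tfrac1{q^2}=\tfrac2n(1+o(1))=\tfrac2n+o(\tfrac1n)$, which is exactly the asserted bound. The arithmetic $4\cdot\tfrac12=2$ is precisely where the improved constant $2$ (over the trivial $3=3\cdot 1$, which uses four points per cell filling the whole cell) comes from.

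So the whole weight of the argument rests on the lemma $\Delta_4(5)\le\tfrac12$, and I would prove it by case analysis on the order type of the five points, using repeatedly that every triangle contained in the unit square has area at most $\tfrac12$. The cases where the points are not in convex position are short. If only three of the five lie on the convex hull, those three form a triangle of area $\le\tfrac12$, and adjoining any interior point leaves the hull a triangle, giving four points with hull area $\le\tfrac12$. If exactly four lie on the hull, the single interior point $I$ lies in one of the two triangles into which a diagonal splits the hull quadrilateral; taking $I$ together with that triangle's three vertices again produces four points whose hull is a triangle of area $\le\tfrac12$.

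The genuinely hard case, and the main obstacle, is when all five points are in convex position, forming a convex pentagon $v_1\cdots v_5$ of area $S$. Here the only four-point subsets are the five quadrilaterals $Q_i$ obtained by deleting one vertex, with $Q_i=S-T_i$ where $T_i$ is the ear triangle at $v_i$, so the statement becomes $\max_i T_i\ge S-\tfrac12$. The naive bounds are too weak: a fan triangulation from one vertex writes $S$ as two ears plus a middle triangle of area $\le\tfrac12$, which only yields $\max_i T_i\ge \tfrac12(S-\tfrac12)$ and hence $\min_i Q_i\le \tfrac S2+\tfrac14$; this settles the claim only for $S\le\tfrac12$, whereas convex pentagons in the square reach $S\approx 0.69$. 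I expect to close the remaining range $S\in(\tfrac12,S_{\max}]$ by an extremal argument: the functional $\min_i Q_i$ is continuous on the compact set of convex pentagons in $[0,1]^2$, so it attains a maximum, and a perturbation analysis (pushing a vertex $v_j$ outward increases every $Q_i$ with $i\ne j$ while leaving the opposite quadrilateral $Q_j$ fixed) should force the optimal configuration onto the boundary of the square with several of the $Q_i$ simultaneously tight, reducing the problem to a finite optimization. The target value $\tfrac12$ is actually attained, for instance by the four corners plus the center (where each three-corner-plus-center subset degenerates to a half-square triangle) and by the symmetric ``house'' pentagon $\{(0,0),(1,0),(1,\tfrac12),(\tfrac12,1),(0,\tfrac12)\}$, so the bound is tight; this tightness is exactly what forces the optimization to be carried out carefully rather than through a lossy inequality, and it is the delicate heart of the proof.
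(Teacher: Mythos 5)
Your reduction step is fine: if one grants the lemma $\Delta_4(5)\le\tfrac12$, the grid-plus-pigeonhole amplification is correct and does yield $\Delta_4(n)\le \tfrac12\cdot\tfrac{1}{q^2}=\tfrac2n+o\bigl(\tfrac1n\bigr)$. The genuine gap is that the lemma itself is never proved. Your treatment of the non-convex-position cases is complete, but you yourself identify that the convex-pentagon case with area $S>\tfrac12$ is the entire difficulty, and what you offer there is a plan (compactness, pushing vertices to the boundary, ``reducing to a finite optimization''), not an argument. This plan is harder to execute than it may look, because the optimization landscape is degenerate: the value $\tfrac12$ is attained not by one extremal configuration but by several genuinely different families --- the four corners plus the center, a two-parameter family of ``house'' pentagons $(0,0),(1,0),(1,h),(c,1),(0,h)$ with $h\ge\tfrac12$, and even the maximum-area (golden-ratio) pentagon in the square, which has area $\approx 0.691$ with all five ears equal. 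A perturbation argument must therefore cope with flat directions and multiple simultaneously tight constraints; nothing in your sketch does this. Determining such exact constant-size values is precisely the kind of statement this paper leaves open (Conjecture \ref{conj1}, that $\Delta_4'(9)=\tfrac14$), so the unproved lemma cannot be waved through as routine.

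It is worth seeing how the paper sidesteps your lemma entirely. Its constant-size fact is the six-point statement $\Delta_4'(6)\le\tfrac12$ (Observation \ref{6_vertex}), which has a two-line proof: draw the line through one of the points and the center of the rectangle; it bisects the area, and by pigeonhole three of the remaining five points lie in one closed half. That trick fundamentally fails for five points --- a pivot plus four others only guarantees a $2$--$2$ split --- which is exactly why your five-point lemma needs a new and much more delicate argument. The price of using six points is that a single grid partition then gives only $\tfrac12\floor{\tfrac{n-1}{5}}^{-1}\approx \tfrac{2.5}{n}$ (Corollary \ref{6_result}), so the paper recovers the constant $2$ differently: it iterates the bisection-through-a-point step (Observation \ref{cut_two}, Lemma \ref{cut_lemma}) to get $\Delta_4(n)\le\tfrac{2}{n-2}$ for $n=2^s+2$ (Theorem \ref{recurse_theorem}), and then transfers this to all $n$ via Observation \ref{phpobs} with $n'=\Theta(\sqrt n)$. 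In short: your amplification is sound, but the load-bearing lemma $\Delta_4(5)\le\tfrac12$ --- apparently true, and tight in several inequivalent ways --- remains unproved, and the paper's own machinery cannot supply it.
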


We also generalize our result to general $k$-gons for any constant $k\geq 4$.  
\begin{theorem}
\label{thm2}
$\Delta_k(n) \leq \dfrac{k-2}{n} + o\left(\dfrac{1}{n}\right)$    
\end{theorem}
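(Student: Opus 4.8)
The plan is to reduce the general statement to the quadrilateral case of Theorem~\ref{thm1} and then to grow the polygon one vertex at a time, paying an extra area of at most $\frac1n+o(\frac1n)$ for each new vertex. The starting point is the observation that the trivial bound $\frac{k-1}{n}$ loses because it fits $k$ points into a single cell of area $\frac{k-1}{n}$ and then bounds the convex hull by the whole cell; since $k$ points near the corners of a cell really do fill it, no choice of cell shape can help, and the saving of one unit of $\frac1n$ must instead come from using that there are many more than $k$ points available locally. Concretely, I would first use a pigeonhole/grid argument to locate a square $C$ of area $\Theta(m/n)$ containing at least $m=m(n)$ points, where $m\to\infty$ is chosen to grow slowly enough that all error terms collected below remain $o(1/n)$.

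Within $C$ (rescaled to a unit square) I would apply Theorem~\ref{thm1} to extract four points whose convex hull $Q_4$ has area at most $\frac2m+o(\frac1m)$; rescaling back to $C$ this is an honest quadrilateral of area at most $\frac{m}{n}\bigl(\frac2m+o(\tfrac1m)\bigr)=\frac2n+o(\frac1n)$, which is the desired bound for $k=4$. The heart of the argument is then an \emph{ear-growing step}: given a convex polygon $Q_j$ (for $4\le j<k$) on $j$ of the points of $C$, I would adjoin a carefully chosen $(j+1)$-st point $p$ so that $\operatorname{conv}(Q_j\cup\{p\})$ exceeds the area of $Q_j$ by at most $\frac1n+o(\frac1n)$. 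Iterating this step $k-4$ times turns $Q_4$ into a convex $k$-gon of area at most $\frac2n+(k-4)\frac1n+o(\frac1n)=\frac{k-2}{n}+o(\frac1n)$, which is exactly Theorem~\ref{thm2}.

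To carry out a single ear-growing step I would argue as follows. The polygon $Q_j$ has diameter $O(\sqrt{1/n})$ and lies inside $C$, whose $m-j$ remaining points have local density $\Theta(n)$. For a point $p$ lying just outside an edge $e$ of $Q_j$, the added ear is essentially the triangle on $e$ with apex $p$, of area $\approx\frac12|e|\cdot\operatorname{dist}(p,e)$. Because $C$ contains many points, I would use an area/counting argument over the thin slab just outside the edges of $Q_j$ to guarantee a point $p$ for which this triangle has area at most $\frac1n+o(\frac1n)$; heuristically, a slab of width $\delta$ along the boundary has area $O(\sqrt{1/n}\,\delta)$ and should capture $\Theta(n)$ times that many points, so some point occurs at a distance forcing a small ear. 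One must also ensure the chosen $p$ keeps the polygon convex, which can be arranged by always adjoining $p$ to the edge that it \emph{sees}, so that the new vertex enters in convex position.

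The step I expect to be the main obstacle is precisely this ear bound: a convex polygon of diameter $\sim\!1/\sqrt n$ together with a point placed at the ambient interpoint scale $\sim\!1/\sqrt n$ generically produces an ear of area $\Theta(1/n)$, so the constant $1$ in $\frac1n$ is genuinely tight and cannot be wasted. Obtaining it exactly, rather than some larger constant times $\frac1n$, requires minimizing the ear over all candidate points and all $j$ edges simultaneously, and balancing this against the slowly growing parameter $m$ so that every approximation (the rescaling error from Theorem~\ref{thm1}, the discretization of the slab, and the convexity bookkeeping) is absorbed into the single $o(1/n)$ term. An alternative route that avoids the incremental bookkeeping is global: triangulate all $n$ points into $\approx 2n$ triangles of total area $\le 1$ and try to exhibit $k-2$ of them that glue along shared edges into a convex $k$-gon of small total area; there the analogous, and I believe equally delicate, obstacle is forcing the glued region to be convex while keeping its area near $(k-2)\cdot\frac1n$.
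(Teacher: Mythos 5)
Your reduction to a dense cell $C$ and the rescaled use of Theorem~\ref{thm1} are fine, but the ear-growing step---which you yourself flag as the main obstacle---is a genuine gap, not a bookkeeping issue. The only support you offer for it is a uniform-density heuristic (a slab of width $\delta$ around $Q_j$ ``should capture $\Theta(n)$ times its area in points''), and adversarial configurations are exactly the ones that refuse to be locally uniform. Theorem~\ref{thm1} is a pure existence statement: it tells you that some four points of $C$ have hull area $\frac{2}{n}+o(\frac1n)$, but it gives you no control over where that quadrilateral sits relative to the remaining points. A small quadrilateral can be \emph{fat and isolated}: a 4-point cluster of area $\approx \frac2n$ and diameter $\approx n^{-1/2}$ whose nearest other point lies at distance $n^{-1/4}$, say. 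Every ear on such a $Q_4$ has area at least on the order of $n^{-1/2}\cdot n^{-1/4}=n^{-3/4}\gg \frac1n$, so a procedure that commits to this $Q_4$ and only ever adds vertices is stuck; the small $k$-gons that must exist in such a configuration are simply not extensions of that quadrilateral. To repair this you would have to prove that \emph{some} near-optimal quadrilateral is growable, which requires global information about the point set that nothing in the proposal provides. Worse, the per-ear constant $1$ is not a detail: $\frac2n+(k-4)\cdot\frac1n=\frac{k-2}{n}$ is the entire content of the theorem, so the heuristic assumes exactly what is to be proved.

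For contrast, the paper's proof avoids additive ear-accounting altogether and is purely multiplicative. By Observation~\ref{cut_two}, through any chosen point of a convex region there is a line bisecting its area; by pigeonhole one of the two halves, together with the chosen point, keeps at least half of the remaining points plus the chosen one (Lemma~\ref{cut_lemma}). Starting from $n=2^s(k-2)+2$ points in a convex set of unit area---the count is engineered so that the $+2$ survives each halving---$s$ iterations yield a convex region of area $2^{-s}$ containing at least $k$ points, whose hull thus has area at most $2^{-s}=\frac{k-2}{n-2}$ (Theorem~\ref{recurse_theorem2}); general $n$ is then handled by the grid-pigeonhole interpolation of Proposition~\ref{phpobs2}. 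Because the area and the point count shrink by the same factor $2$ at every step, no local density assumption, no shape control, and no choice of a ``good'' starting polygon are ever needed. If you want to pursue your route, be aware that the ear lemma you need is at least as hard as the theorem itself; the bisection idea is the way around it.
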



\section{Convex quadrilaterals: Proof of Theorem \ref{thm1}}
We solve a more general problem by having the points on a unit rectangle (instead of a unit square). Given a set $\mathcal{P}=\{P_1,P_2,\ldots,P_n\}$ of $n\geq 3$ points on the unit rectangle $[0,d]\times [0,d^{-1}]$ and $k \leq n$, let $\mathcal{A'}_{k}(P)$ be the minimum area of the convex hull determined by a set of $k$ points in $\mathcal{P}$ for any $0<d\leq 1$. The supremum value of $\mathcal{A'}_{k}(P)$ over all choices of $\mathcal{P}$ is denoted by $\Delta_k'(n)$. It is easy to see that by definition $$\Delta_k(n) \leq \Delta_k'(n)$$
For $k=4$, when there are $n$ points, in the argument to obtain a trivial bound of $\Delta_4'(n) \leq \dfrac{3}{n-1}$, we partition the unit rectangle into at most ${(n-1)}/{3}$ smaller rectangles. This would guarantee that there exists a small rectangle containing at least $4$ points. Naturally, one can also extend this idea to make sure there are at most $\left( \dfrac{n-1}{n'-1} \right)$ smaller rectangles and force $n'\geq 4$ points into one rectangle. This gives us a relation between $\Delta_4'(n)$ and $\Delta_4'(n')$, which is formalized in Observation \ref{phpobs}.
\begin{observation}
\label{phpobs} For $4\leq n'\leq n$,
$$\Delta_4'(n) \leq {\floor{\dfrac{n-1}{n'-1}}}^{-1}\Delta_4'(n')$$
\end{observation}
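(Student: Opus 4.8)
The plan is to prove Observation~\ref{phpobs} by a partitioning argument that reduces the $n$-point problem on the unit rectangle to the $n'$-point problem on a smaller rectangle, using the pigeonhole principle to force $n'$ points into one cell. Let me sketch how I would carry this out.

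\textbf{Setup and strategy.} Fix any placement $\mathcal{P}$ of $n$ points in the unit rectangle $[0,d]\times[0,d^{-1}]$. I would like to find a sub-rectangle containing at least $n'$ of the points whose area is a controlled fraction of the whole, and then invoke the definition of $\Delta_4'(n')$ on that sub-rectangle (after rescaling it to a unit-area rectangle). The quantity $m := \floor{(n-1)/(n'-1)}$ is exactly the number of cells I want to chop the rectangle into. First I would slice the rectangle into $m$ parallel strips, each of area $1/m$, by cutting along the long axis (say into strips $[0,d]\times[(j-1)/(md),\,j/(md)]$ for $j=1,\ldots,m$; each such strip is itself a rectangle of area $1/m$ and the right aspect ratio can be arranged, or more simply each strip is similar to a unit rectangle after rescaling). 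By the pigeonhole principle, since there are $n$ points distributed among $m$ strips, some strip $S$ contains at least $\ceil{n/m}$ points.

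\textbf{The pigeonhole count.} The arithmetic core is to check that $\ceil{n/m} \geq n'$, i.e.\ that $m = \floor{(n-1)/(n'-1)}$ cells really do force at least $n'$ points into one cell. From $m \leq (n-1)/(n'-1)$ we get $m(n'-1) \leq n-1$, hence $n \geq m(n'-1)+1 = m\,n' - m + 1 > m(n'-1)$, which after dividing by $m$ and taking ceilings yields the bound $\ceil{n/m} \geq n'$; I would write this out carefully since the floor and the $\pm 1$ terms are where an off-by-one error would creep in. This is exactly the standard trick generalizing the case $n'=4$, $m=\floor{(n-1)/3}$ mentioned in the text.

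\textbf{Rescaling and conclusion.} Having found a sub-rectangle $S$ of area $1/m$ containing $\geq n'$ points, I would apply an area-preserving affine rescaling that maps $S$ to a rectangle of the form $[0,d']\times[0,d'^{-1}]$ of \emph{unit} area; since every convex hull's area scales by the factor $m$ under this map (the map multiplies areas by $m$), the minimum quadrilateral area inside $S$ is $1/m$ times the minimum quadrilateral area of the rescaled configuration. Any four of the $\geq n'$ points in $S$ form a convex hull, so restricting attention to (some $n'$ of) these points and applying the definition of $\Delta_4'$, the rescaled configuration has some quadrilateral of area at most $\Delta_4'(n')$; pulling back, $\mathcal{A}_4'(\mathcal{P}) \leq m^{-1}\Delta_4'(n')$. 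Taking the supremum over all placements $\mathcal{P}$ gives $\Delta_4'(n) \leq m^{-1}\Delta_4'(n') = \floor{(n-1)/(n'-1)}^{-1}\Delta_4'(n')$, as claimed.

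\textbf{Main obstacle.} The only genuinely delicate point is that slicing a rectangle of aspect ratio $d$ into $m$ equal strips does not automatically produce strips that are themselves rectangles of the admissible form $[0,d']\times[0,d'^{-1}]$ with $0 < d' \leq 1$ — the aspect ratio of each strip changes by a factor of $m$. This is precisely why the problem was generalized to arbitrary $d$ rather than fixing the unit square: because $\Delta_4'$ is defined as a supremum over \emph{all} admissible $d$, the rescaled strip (whatever its aspect ratio, after the area-preserving map) is still a legitimate instance of the $\Delta_4'(n')$ problem, so the bound goes through. I expect verifying that the rescaled strip's aspect ratio stays within the allowed range (or arguing that the supremum over $d$ makes this a non-issue) to be the subtle step, together with the off-by-one pigeonhole bookkeeping above.
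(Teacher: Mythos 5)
Your proposal is correct and takes essentially the same approach as the paper: partition the unit rectangle into $\floor{\dfrac{n-1}{n'-1}}$ equal-area rectangular cells, apply the pigeonhole principle to force at least $n'$ points into one cell, and rescale that cell to a unit-area rectangle, which is a legitimate instance precisely because $\Delta_4'$ is defined as a supremum over all aspect ratios $d$. Your use of parallel strips rather than a grid, and your more explicit off-by-one and aspect-ratio bookkeeping, are immaterial differences from the paper's proof.
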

\begin{proof}
Partition the unit area into a grid with ${\floor{\dfrac{n-1}{n'-1}}}$ 
rectangles of area ${\floor{\dfrac{n-1}{n'-1}}}^{-1}$. Since there are $n$ points and ${\floor{\dfrac{n-1}{n'-1}}} \leq \dfrac{n-1}{n'-1}$ rectangles, by the pigeonhole principle, one of the smaller rectangles (with their boundary included) has at least $n'$ points. Therefore, there always exists $n'$ points within an area at most ${\floor{\dfrac{n-1}{n'-1}}}^{-1}$. It now follows by a scaling argument that there exist four points within an area at most ${\floor{\dfrac{n-1}{n'-1}}}^{-1}\Delta_4'(n')$.
\end{proof}
When $n'=4$, this gives the trivial bound of $\Delta_4'(n) \leq \dfrac{3}{n-3} \approx 3/n$ as expected. We can do slightly better by tuning the value of $n'$ to be $6$. We start with finding the exact value of $\Delta_4'(6)$.


\begin{observation}
\label{6_vertex}
$\Delta_4'(6)=1/2$    
\end{observation}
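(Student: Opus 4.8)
The plan is to prove the two inequalities $\Delta_4'(6)\ge 1/2$ and $\Delta_4'(6)\le 1/2$ separately. For the lower bound I would exhibit an explicit placement in the unit square (the admissible case $d=1$) for which every quadruple spans area at least $1/2$, the natural candidate being the $2\times 3$ grid $\{0,\tfrac12,1\}\times\{0,1\}$. The six points split into three collinear ``bottom'' points on $y=0$ and three on $y=1$, and every $4$-subset is of one of two types. A subset taking all three points of one side together with one point of the other has convex hull a triangle of base $1$ and height $1$, hence area exactly $1/2$; a subset taking two points from each side has convex hull a trapezoid whose two horizontal parallel sides have lengths $\ell_1,\ell_2\in\{\tfrac12,1\}$ and whose height is $1$, so its area is $\tfrac12(\ell_1+\ell_2)\ge \tfrac12$. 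A short check of the $\binom{6}{4}=15$ subsets confirms the minimum $1/2$ is attained, giving $\Delta_4'(6)\ge 1/2$.

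For the upper bound I must show that \emph{every} placement of six points in an area-$1$ rectangle contains four points whose convex hull has area at most $1/2$. Writing $A\le 1$ for the area of the convex hull $H$ of all six points, it suffices to find four of them spanning area at most $A/2$, and I would case on the number $h\in\{3,4,5,6\}$ of vertices of $H$. The clean case is $h=6$, all six points in convex position: a main diagonal $V_1V_4$ of the convex hexagon $V_1V_2\cdots V_6$ cuts it into the two convex quadrilaterals $V_1V_2V_3V_4$ and $V_1V_4V_5V_6$, whose areas sum to $A$, so the smaller is a quadruple of area at most $A/2\le 1/2$. This already shows tightness against the grid example, where $A=1$ and the split is into two pieces of area exactly $1/2$.

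When $h\le 5$ at least one point $P$ lies in the interior of $H$, and the model argument is a fan/averaging step. Ordering the hull vertices $V_1,\dots,V_h$, the triangles $PV_iV_{i+1}$ have areas summing to $A$, and two cyclically adjacent ones glue along $PV_{i+1}$ into the quadrilateral on $\{P,V_i,V_{i+1},V_{i+2}\}$; summing over the $h$ adjacent pairs gives total $2A$, so for $h=5$ some quadruple has area at most $2A/5\le A/2$. The main obstacle, where the real work sits, is twofold. First, the glued quadrilateral equals the sum of its two triangles only when it is convex, i.e. when $P$ does not fall inside the ``ear'' $V_iV_{i+1}V_{i+2}$; in clustered configurations (an interior point lying deep in an ear, or two interior points bunched near one hull vertex) the convex hull can jump up to that ear triangle and overshoot $A/2$. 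Second, the plain average degrades to $2A/4=A/2$ for $h=4$ and to $2A/3>A/2$ for $h=3$, so the extra interior points ($2$ and $3$ respectively) must be used rather than ignored.

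I plan to absorb both difficulties through the dichotomy that $\mathcal{A}'_4(P)$ is the minimum of (a) the smallest convex quadrilateral on four points in convex position, and (b) the smallest triangle spanned by three of the points that contains a fourth. If some option of type (b) has area at most $A/2$ we are done at once; otherwise every point-containing triangle is large, which forces the interior points toward the hull boundary rather than into a corner, and in that regime the averaging of type (a)---restricted to the convex, non-ear adjacent pairs, and iterated once inside the relevant sub-triangle in the $h=3$ case---delivers a quadrilateral of area at most $A/2$. Making this dichotomy clean enough to cover $h=4$, where the two interior points interact, is the step I expect to be most delicate; by contrast the $h=6$ split and the grid construction are immediate.
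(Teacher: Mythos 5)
Your lower bound is correct and matches the paper's: the $2\times 3$ grid, with the triangle/trapezoid case check, indeed gives $\Delta_4'(6)\geq 1/2$. The genuine gap is in the upper bound, which is the actual content of the observation. You fully prove it only when all six points are in convex position ($h=6$, hexagon diagonal split). For $h=5$ your fan-averaging bound of $2A/5$ is valid only when the interior point $P$ lies outside every ear $V_iV_{i+1}V_{i+2}$ (as you yourself note, otherwise the hull of $\{P,V_i,V_{i+1},V_{i+2}\}$ is the ear triangle, whose area is \emph{larger} than the sum of the two fan triangles and need not be below $A/2$); for $h\in\{3,4\}$ the averaging gives nothing, and the dichotomy you propose to absorb the clustered configurations is never carried out --- you explicitly flag the $h=4$ case as delicate and unresolved. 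So three of the four cases are a plan rather than a proof, and the configurations your plan struggles with (interior points deep in an ear, two interior points bunched near a hull vertex) are exactly the ones a complete argument must handle.

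The paper avoids all of this case analysis with a single observation you missed: the rectangle is centrally symmetric, so for an arbitrarily chosen point $P$ among the six, the line through $P$ and the centre $C$ of the rectangle cuts the rectangle into two convex pieces of area exactly $1/2$ each. By pigeonhole, at least $3$ of the remaining $5$ points lie in one closed half; together with $P$ (which lies on the cut line, hence in both closed halves) these are four points contained in a convex region of area $1/2$, so their convex hull has area at most $1/2$. The same trick even proves the stronger, hull-relative statement you were aiming for: by the intermediate value theorem (the paper's bisecting-line observation), there is a line through $P$ splitting the hull of the six points into two pieces of area $A/2$, and pigeonhole then yields four points spanning area at most $A/2 \le 1/2$. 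Replacing your entire case analysis by this bisection-plus-pigeonhole step closes the gap.
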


\begin{proof}
Let $C$ be the centre of the rectangle $[0,d]\times [0,d^{-1}]$ and $\mathcal{P}$ be a set of any six points in $[0,d]\times [0,d^{-1}]$. Pick an arbitrary point $P \in \mathcal{P}$ and extend the line segment $PC$ into a line. The extended line $PC$ cuts the rectangle $[0,d]\times [0,d^{-1}]$ into two convex parts, and by symmetry, both these parts have the same area, i.e., $1/2$. From the pigeonhole principle, at least $3$ of the remaining $5$ points lie on one side of the extended line $PC$. Therefore, $4$ points (including $P$) exist, which are contained in a convex shape whose area is $1/2$. Therefore, $\Delta_4'(6)\leq 1/2$.

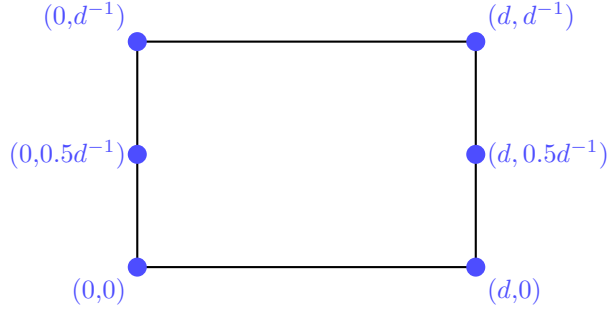
\begin{figure}[htbp]
    \centering

\begin{tikzpicture}[scale=3, line width=0.7pt]
  \draw[thick] (0,0) rectangle (1.5,1);
  
  \fill[blue!70] (0,0) circle (1.2pt) node[below left, font=\footnotesize] {(0,0)};
  \fill[blue!70] (0,0.5) circle (1.2pt) node[left, font=\footnotesize] {(0,$0.5d^{-1}$)};
  \fill[blue!70] (0,1) circle (1.2pt) node[above left, font=\footnotesize] {(0,$d^{-1}$)};
  \fill[blue!70] (1.5,0) circle (1.2pt) node[below right, font=\footnotesize] {($d$,0)};
  \fill[blue!70] (1.5,0.5) circle (1.2pt) node[right, font=\footnotesize] {($d,0.5d^{-1}$)};
  \fill[blue!70] (1.5,1) circle (1.2pt) node[above right, font=\footnotesize] {($d,d^{-1}$)};
\end{tikzpicture}
  \caption{$\Delta_4'(6)\geq 1/2$}
    \label{fig:p6}
\end{figure}

It may be noted the bound of $1/2$ is achieved in Figure \ref{fig:p6}. Therefore $\Delta_4'(6)\geq 1/2$
\end{proof}

\begin{corollary}
\label{6_result}
$\Delta_4(n) \leq \dfrac{1}{2} \floor{\dfrac{n-1}{5}}^{-1} \leq \dfrac{2.5}{n-5}$    
\end{corollary}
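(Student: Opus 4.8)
The plan is to read this off directly from the two preceding observations, since the corollary is essentially just their composition. First I would invoke Observation~\ref{phpobs} with the specific choice $n'=6$, which is the ``tuned'' value flagged in the text just before the statement. This yields the pigeonhole-type recursion
\[
\Delta_4'(n) \leq \floor{\dfrac{n-1}{5}}^{-1}\Delta_4'(6),
\]
where I have used $n'-1 = 5$ in the floor. The next step is to substitute the exact value $\Delta_4'(6)=1/2$ supplied by Observation~\ref{6_vertex}, giving $\Delta_4'(n) \leq \tfrac12\floor{\tfrac{n-1}{5}}^{-1}$. Finally, since the excerpt already records the trivial comparison $\Delta_4(n)\leq \Delta_4'(n)$ (the square is a special case $d=1$ of the rectangle), I transfer the bound from the primed quantity back to $\Delta_4(n)$, which establishes the first inequality of the corollary.

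For the second inequality I only need to clean up the floor into the closed form $2.5/(n-5)$. The point is that $\floor{\tfrac{n-1}{5}}$ is an integer, so its fractional deficit from $\tfrac{n-1}{5}$ is at most $4/5$; concretely,
\[
\floor{\dfrac{n-1}{5}} \;\geq\; \dfrac{n-1}{5} - \dfrac{4}{5} \;=\; \dfrac{n-5}{5}.
\]
Inverting this and multiplying by $\tfrac12$ turns $\tfrac12\floor{\tfrac{n-1}{5}}^{-1}$ into at most $\tfrac12\cdot\tfrac{5}{n-5} = \tfrac{2.5}{n-5}$, completing the chain. This floor estimate is the only step that is not an immediate citation, and it is entirely routine.

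Because every ingredient is already proved in the excerpt, I do not anticipate a genuine obstacle here; the corollary is a bookkeeping consequence of setting $n'=6$. If anything, the one place to be careful is the direction of the floor bound: one must use $\floor{x}\geq x - 4/5$ for the arithmetic progression $x=(n-1)/5$ rather than the weaker generic bound $\floor{x} > x-1$, since the latter would only give $\tfrac{n-6}{5}$ in the denominator and would not reproduce the advertised $n-5$. Keeping that inequality tight is what makes the final constant come out to exactly $2.5$.
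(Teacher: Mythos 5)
Your proposal is correct and matches the paper's proof exactly: the paper likewise substitutes $n'=6$ into Observation~\ref{phpobs}, plugs in $\Delta_4'(6)=1/2$ from Observation~\ref{6_vertex}, and passes from the floor to $\frac{2.5}{n-5}$ (a step the paper leaves implicit but which your integrality argument $\floor{\frac{n-1}{5}} \geq \frac{n-5}{5}$ justifies). Your observation that the generic bound $\floor{x} > x-1$ would only yield $\frac{2.5}{n-6}$ is a valid and careful point, but it does not change the route --- this is the same proof, just with the arithmetic spelled out.
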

\begin{proof}
By substituting $n'=6$ in Observation \ref{phpobs} and using Observation \ref{6_vertex}, we get 
$$\Delta_4'(n) \leq {\floor{\dfrac{n-1}{5}}}^{-1}\Delta_4'(6) = \dfrac{1}{2}{\floor{\dfrac{n-1}{5}}}^{-1} \leq \dfrac{2.5}{n-5}$$

\end{proof}
We will now extend the idea of Observation \ref{6_vertex} to all $n$ of the form $2^s+2$ for any $s\geq 2$, i.e.,  for all $n$ of such form, we prove $\Delta_4(n)\leq \dfrac{2}{n-2}$ in Theorem \ref{recurse_theorem} using Observation \ref{cut_two} and Lemma \ref{cut_lemma}.


\begin{observation}
\label{cut_two}
For any point $P$ in a convex polygon $C$, there exists a line through $P$ which partitions $C$ into two halves of equal area.    
\end{observation}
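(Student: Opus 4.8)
The plan is to use a continuity (intermediate-value) argument of ham-sandwich type. Fix the point $P$ in the convex polygon $C$, and for each direction angle $\theta \in [0,2\pi)$ let $\ell_\theta$ be the directed line through $P$ pointing in direction $\theta$. Because $C$ is convex, $\ell_\theta$ meets the boundary of $C$ in a single chord and splits $C$ into two convex pieces; let $A(\theta)$ denote the area of the piece lying to the left of $\ell_\theta$, and write $S$ for the total area of $C$. The goal is to find $\theta$ with $A(\theta) = S/2$.

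First I would record two facts. Fact (i): reversing the direction of the line leaves the underlying line unchanged but interchanges its two sides, so $A(\theta+\pi) = S - A(\theta)$ for every $\theta$. Fact (ii): $A$ is continuous in $\theta$. As the line through the fixed point $P$ rotates, its two boundary-crossing points move continuously along the (finitely many) edges of $C$, so the region they cut off, and hence its area, vary continuously; the only angles needing care are those at which $\ell_\theta$ sweeps through a vertex of $C$, but even there the cut-off area equals the common limit of the nearby values, so continuity is preserved.

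Then I would introduce the gap function $g(\theta) = A(\theta) - \tfrac{S}{2}$, noting that a bisecting line through $P$ exists exactly when $g(\theta)=0$ for some $\theta$. By Fact (i) we have $g(\theta+\pi) = (S - A(\theta)) - \tfrac{S}{2} = -g(\theta)$, so $g(0)$ and $g(\pi)$ are either both zero or of opposite signs. Since $g$ is continuous by Fact (ii), the Intermediate Value Theorem furnishes some $\theta_0 \in [0,\pi]$ with $g(\theta_0)=0$, that is $A(\theta_0) = S/2$. The line $\ell_{\theta_0}$ is then the desired line through $P$ splitting $C$ into two halves of equal area.

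I expect the only genuinely delicate point to be the rigorous justification of Fact (ii): arguing from convexity that each $\ell_\theta$ meets the boundary in exactly one chord, so that ``the region to the left'' is well defined, and that its area depends continuously on $\theta$ even as the rotating line crosses vertices of $C$. Once that continuity is in hand, the swap identity of Fact (i) together with the Intermediate Value Theorem closes the argument immediately.
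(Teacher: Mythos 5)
Your proposal is correct and follows essentially the same route as the paper: rotate a line through $P$, observe that a half-turn interchanges the two sides (so the signed deviation from half the area changes sign), and invoke the Intermediate Value Theorem via continuity of the area as a function of the rotation angle. Your write-up is somewhat more careful than the paper's (the directed-line formalism, the gap function $g$, and the explicit attention to continuity as the line sweeps through vertices), but the underlying argument is identical.
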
 
\begin{proof}
Let $A$ be the area of $C$. Pick any arbitrary line $L$ through $P$ and let the areas of the convex polygons on both of its sides of $L$ be $L_1$ and $L_2$ such that $L_1\leq A/2 \leq L_2$. By rotating the line by ${180}^{\circ}$ around $P$, we get $L_2 \leq 1/2 \leq L_1$. Since $L_1$ changes continuously as a function of the angle of rotation $\theta$, by the intermediate value theorem, it must have achieved $1/2$ in between for some $\theta$.
\end{proof}

\begin{lemma}
\label{cut_lemma}
If a convex polygon of area $\Delta$ has $2^i\cdot \beta+2$ points, then there is a convex polygon of area $\Delta/2$ which contains at least $2^{i-1}\cdot \beta+2$ points.
\end{lemma}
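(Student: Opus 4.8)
The plan is to mimic the argument of Observation~\ref{6_vertex}, but to replace the rectangle's center by the general area-bisection guaranteed by Observation~\ref{cut_two}. The guiding intuition is that the multiplicative part $2^i\beta$ of the count is what a pigeonhole split halves, while the additive constant $2$ is exactly what an anchor point lying \emph{on} the cut contributes, so it should survive the halving rather than be eroded.

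Concretely, first I would single out one of the $2^i\beta+2$ points and call it $P$. Applying Observation~\ref{cut_two} to $P$ and the convex polygon $C$ of area $\Delta$, I obtain a line $L$ through $P$ splitting $C$ into two convex pieces $C_1$ and $C_2$, each of area $\Delta/2$. Since $P\in L$, the anchor $P$ lies in both closed halves, and this is the feature I will exploit at the end.

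Next I would distribute the remaining $2^i\beta+1$ points. Each of them lies in $C_1$ or in $C_2$ (a point lying on $L$ may be assigned to either, which only helps), so the two closed halves together account for at least $2^i\beta+1$ of these points. By the pigeonhole principle, one of them---say $C_1$---contains at least $\ceil{(2^i\beta+1)/2}$ of them. Since $\beta$ is a positive integer and $i\ge 1$, the quantity $2^i\beta$ is even, so this ceiling equals $2^{i-1}\beta+1$. Adding back the anchor $P$, which also lies in $C_1$, yields at least $2^{i-1}\beta+2$ points inside the convex polygon $C_1$ of area $\Delta/2$, which is exactly the claim.

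The step that warrants the most care is the bookkeeping around the cut line $L$: points of $\mathcal{P}$ that happen to lie on $L$ must not be miscounted, and the parity computation must be exact for the ``$+2$'' to be preserved. I would handle both by working with the two \emph{closed} halves (so boundary points are available to both) and by always counting the anchor $P$ in the chosen half. Beyond Observation~\ref{cut_two}, no continuity or optimization argument is needed, so I expect this discrete accounting to be the only real content of the proof.
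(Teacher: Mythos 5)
Your proof is correct and follows essentially the same route as the paper: pick an anchor point $P$, bisect the polygon through $P$ via Observation~\ref{cut_two}, pigeonhole the remaining $2^i\beta+1$ points into one closed half, and count $P$ in that half to recover the ``$+2$''. Your explicit treatment of points on the cut line and of the parity of $2^i\beta$ is more careful than the paper's terse version, but it is the same argument.
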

\begin{proof}

Pick one of the $2^i\cdot \beta+2$ points arbitrarily, say $P$. From Observation \ref{cut_two}, there is a line $L$ through $P$ cutting the polygon into two halves. Note that by the pigeonhole principle, one of the halves would have at least $2^{i-1}\cdot \beta+1$ points. By including $P$ in this region, we get a convex polygon of area $\Delta/2$, containing at least $2^{i-1}\cdot \beta+2$ points.
\end{proof}


We first introduce some new notation. Given a set $\mathcal{Q}=\{Q_1,Q_2,\ldots,Q_n\}$ of $n\geq 3$  points on an \textit{arbitrary convex object $\mathcal{C}$ of unit area}, let $\mathcal{A}_k^{\mathcal{C}}(Q)$ be the minimum area of the convex hull determined by some $k$ points in $\mathcal{Q}$. The supremum value of $\mathcal{A}_k^{\mathcal{C}}(Q)$ over all choices of $\mathcal{Q}$ is denoted by $\Delta^{\mathcal{C}}_k(n)$. Let $\Delta_k''(n)$ denote the supremum value of $\Delta^{\mathcal{C}}_k(n)$ over all convex objects of unit area. It is easy to see that by definition $$\Delta_k(n) \leq \Delta_k'(n) \leq \Delta_k''(n)$$
\begin{theorem}
\label{recurse_theorem}
If $n=2^s+2$ for some integer $s>0$, then $$\Delta_4(n) \leq \Delta_4'(n) \leq \dfrac{1}{2^{s-1}} = \dfrac{2}{n-2}$$   
\end{theorem}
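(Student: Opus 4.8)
The plan is to reduce to the most general setting and then iterate Lemma~\ref{cut_lemma}. Since $\Delta_4(n) \le \Delta_4'(n) \le \Delta_4''(n)$, it suffices to bound $\Delta_4''(2^s+2)$; that is, I would work with an arbitrary convex object $\mathcal{C}$ of unit area carrying $n = 2^s + 2$ points. First I would write the point count in the shape $2^s \cdot 1 + 2$, so that Lemma~\ref{cut_lemma} applies with $\beta = 1$ and $i = s$.

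Next I would run a halving iteration, maintaining the invariant that after $t$ steps there is a convex region of area $1/2^t$ containing at least $2^{s-t} + 2$ points. The base case $t = 0$ is the starting configuration on $\mathcal{C}$. For the inductive step, as long as $t \le s-1$ the exponent $s-t$ is at least $1$, so the current region holds $2^{s-t}\cdot 1 + 2$ points, and Lemma~\ref{cut_lemma} produces a convex sub-region of half the area, namely $1/2^{t+1}$, containing at least $2^{s-t-1} + 2 = 2^{s-(t+1)} + 2$ points. This is precisely the invariant at level $t+1$.

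I would stop the iteration at $t = s-1$. At that point the invariant guarantees a convex region of area $1/2^{s-1}$ containing at least $2^{1} + 2 = 4$ points, and the convex hull of any four of them is contained in that region, hence has area at most $1/2^{s-1}$. Since $n - 2 = 2^s$, this equals $2/(n-2)$, yielding $\Delta_4''(n) \le 1/2^{s-1} = 2/(n-2)$ and therefore the claimed bounds on $\Delta_4'(n)$ and $\Delta_4(n)$. The degenerate case $s = 1$ (so $n = 4$) needs no iteration: four points in a unit-area region have convex-hull area at most $1 = 1/2^{0}$.

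The index bookkeeping is routine; the one step I expect to need care is verifying that the recursion stays valid for an arbitrary convex object and not merely a rectangle. Here I would appeal to Observation~\ref{cut_two}: because every convex region admits an area-bisecting line through any prescribed point, each halving step leaves a convex region on which Lemma~\ref{cut_lemma} can again be applied with the same value $\beta = 1$. Granting Lemma~\ref{cut_lemma}, little genuine difficulty remains; the subtlety is only to confirm that the exponent of $2$ decreases by exactly one per step and that terminating at four points — rather than stopping at six and invoking Observation~\ref{6_vertex} — is what produces the tight constant $2/(n-2)$.
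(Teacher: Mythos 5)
Your proposal is correct and follows essentially the same route as the paper: the paper also passes to the stronger statement $\Delta_4''(n)\leq 1/2^{s-1}$ and proves, by induction on $i$, that for every $i\in[0,s-1]$ some convex region of area at most $1/2^{i}$ contains at least $2^{s-i}+2$ points, which is exactly your halving invariant with $\beta=1$ in Lemma~\ref{cut_lemma}. Your explicit treatment of the degenerate case $s=1$ and of why the iteration remains valid on arbitrary convex objects (via Observation~\ref{cut_two}) only makes explicit what the paper leaves implicit.
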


\begin{proof}
Let $\mathcal{Q}=\{Q_1,Q_2,\ldots,Q_n\}$ be a set of $n$ points. We will prove a stronger statement of $$\Delta''_4(n) \leq \dfrac{1}{2^{s-1}} $$

\begin{observation}
\label{obsrec}
For every $i \in [0,s-1]$, there exists a convex polygon of area at most $\dfrac{1}{2^{i}}$ which has at least $2^{s-i}+2$ points from $\mathcal{Q}$.
\end{observation}
\begin{proof}
We prove this by induction on $i$. When $i=0$, the claim is true by definition. Suppose there exists a convex polygon of area at most $\dfrac{1}{2^{i}}$ with at least $2^{s-i}+2$ points from $\mathcal{Q}$, then from Lemma \ref{cut_lemma}, for $i<s-1$, there exists a convex polygon of area at most $\dfrac{1}{2^{i+1}}$ with at least $2^{s-(i+1)}+2$ points from $\mathcal{Q}$ 
\end{proof}
By substituting $i=s-1$ in Observation \ref{obsrec}, Theorem \ref{recurse_theorem} follows.
\end{proof}

One may note that this would give an upper bound of $\approx 2/n$ for many arbitrarily large $n$ of the form $2^s+2$. However, there are also several arbitrarily large $n$ of the form, say, $2^s+1$, for which this bound is not useful. We fix this using Observation \ref{phpobs}.

\begin{corollary}
$\Delta_4(n) \leq \dfrac{2}{n}+o\left(\dfrac{1}{n}\right)$    
\end{corollary}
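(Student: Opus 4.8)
The plan is to transfer the clean bound of Theorem~\ref{recurse_theorem}, which holds only for the sparse set of integers of the form $2^s+2$, to an \emph{arbitrary} $n$ by means of the pigeonhole reduction of Observation~\ref{phpobs}. Given $n$, I would pick an integer $s$, set $n'=2^s+2$, apply Theorem~\ref{recurse_theorem} to get $\Delta_4'(n')\le 1/2^{s-1}$, and then feed this into Observation~\ref{phpobs} to obtain
\[
\Delta_4(n)\le \Delta_4'(n)\le \floor{\dfrac{n-1}{2^s+1}}^{-1}\dfrac{1}{2^{s-1}}.
\]
The entire task then reduces to choosing $s$ so that the right-hand side is $2/n+o(1/n)$.

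The tempting choice---take $n'$ to be the \emph{largest} value of the form $2^s+2$ not exceeding $n$---fails, because consecutive admissible values differ by a factor of about two: this forces $2^s\approx n/2$, the floor collapses to $1$, and the bound degrades to $\approx 4/n$. The resolution is to choose $n'$ well below $n$, so that the rectangle is tiled by \emph{many} small copies and the loss from rounding the floor is cheap. Concretely I would take $s=\floor{\tfrac12\log_2 n}$, so that $2^s=\Theta(\sqrt n)$; for all large $n$ this satisfies $4\le n'\le n$, as required by Observation~\ref{phpobs}. Bounding the floor from below via $\floor{x}\ge x-1$ gives $\floor{\frac{n-1}{2^s+1}}\ge \frac{n-2-2^s}{2^s+1}$, and substituting yields
\[
\Delta_4(n)\le \dfrac{2(2^s+1)}{2^s\,(n-2-2^s)}
=\dfrac{2}{n}\cdot\dfrac{1}{1-(2+2^s)/n}\cdot\Bigl(1+\tfrac{1}{2^s}\Bigr).
\]
Since $2^s=\Theta(\sqrt n)$, both correction factors are $1+O(1/\sqrt n)$, so the bound reads $\frac{2}{n}+O(n^{-3/2})=\frac2n+o(1/n)$, as claimed.

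The only real obstacle is controlling this rounding loss, and the key observation is that the two sources of error pull against each other in $s$: the $2^s$ points wasted to the floor contribute a relative error of order $2^s/n$, while the slack inherited from passing through $n'$ rather than $n$ contributes a relative error of order $1/2^s$. Balancing them at $2^s\approx\sqrt n$ makes each $O(1/\sqrt n)$, which is what pushes the correction strictly below $1/n$. I would note that any divergent choice with $2^s\to\infty$ and $2^s=o(n)$ already suffices for the qualitative $o(1/n)$ statement; the balanced choice is made only to exhibit the cleanest explicit error term.
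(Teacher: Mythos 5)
Your proof is correct and takes essentially the same route as the paper: the paper also picks $n'$ of the form $2^s+2$ with $2^s=\Theta(\sqrt{n})$ (concretely $n'=2^{\ceil{0.5i}}+2$ where $2^{i-1}+2\leq n<2^i+2$), feeds Theorem~\ref{recurse_theorem} into Observation~\ref{phpobs}, and bounds the floor below by $\frac{n-n'}{n'-1}$ to land at $\frac{2}{n}+\mathcal{O}\pth{n^{-3/2}}$. The only cosmetic difference is that the paper relaxes $\frac{2}{n'-2}$ to $\frac{2}{n'-1}$ so that the factor $n'-1$ cancels exactly, whereas you keep it and absorb the extra $1+2^{-s}$ factor into the error term.
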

\begin{proof}
For every $n$, there exists some $i$ such that $$2^{i-1}+2\leq n<2^{i}+2$$
Let $n'=2^{\ceil*{0.5i}}+2$. From Theorem \ref{recurse_theorem}, $$\Delta_4'(n')\leq \dfrac{2}{n'-2} < \dfrac{2}{n'-1} $$

$${\floor{\dfrac{n-1}{n'-1}}} > \frac{n-1}{n'-1}-1 \geq \frac{n-n'}{n'-1} $$
From Observation \ref{phpobs}, $$\Delta_4'(n) \leq \Delta_4'(n'){\floor{\dfrac{n-1}{n'-1}}}^{-1} \leq \dfrac{2}{n'-1}\cdot \frac{n'-1}{n-n'} = \dfrac{2}{n-n'} $$
$$=\dfrac{2}{n} + \dfrac{2n'}{n(n-n')} =\dfrac{2}{n}+\mathcal{O}\left(\dfrac{1}{n^{1.5}}\right)$$

\end{proof}

\begin{figure}[htbp]
    \centering

\begin{tikzpicture}[scale=3, line width=0.7pt]
  \draw[thick] (0,0) rectangle (1.5,1);
  
  \fill[blue!70] (0,0) circle (1.2pt) node[below left, font=\footnotesize] {(0,0)};
  \fill[blue!70] (0,0.5) circle (1.2pt) node[left, font=\footnotesize] {(0,$0.5d^{-1}$)};
  \fill[blue!70] (0,1) circle (1.2pt) node[above left, font=\footnotesize] {(0,$d^{-1}$)};

  \fill[blue!70] (0.75,0) circle (1.2pt) node[below left, font=\footnotesize] {($0.5d$,0)};
  \fill[blue!70] (0.75,0.5) circle (1.2pt) node[left, font=\footnotesize] {($0.5d$,$0.5d^{-1}$)};
  \fill[blue!70] (0.75,1) circle (1.2pt) node[above left, font=\footnotesize] {($0.5d$,$d^{-1}$)};

  \fill[blue!70] (1.5,0) circle (1.2pt) node[below right, font=\footnotesize] {($d$,0)};
  \fill[blue!70] (1.5,0.5) circle (1.2pt) node[right, font=\footnotesize] {($d,0.5d^{-1}$)};
  \fill[blue!70] (1.5,1) circle (1.2pt) node[above right, font=\footnotesize] {($d,d^{-1}$)};
\end{tikzpicture}

  \caption{$\Delta_4'(9)\geq 1/4$}
    \label{fig:p9}
\end{figure}
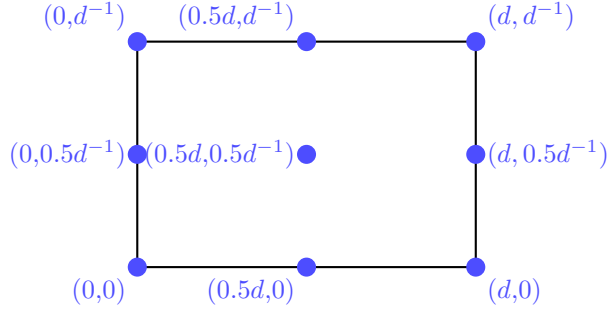

From Figure \ref{fig:p9}, it is easy to see that $\Delta_4'(9)\geq 1/4$. We conjecture that the other direction is also true, i.e., $\Delta_4'(9)\leq 1/4$

\begin{conjecture}
\label{conj1}
$\Delta_4'(9)=1/4$
\end{conjecture}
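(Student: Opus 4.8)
\noindent\textit{A plan for the upper bound.}\quad
The inequality $\Delta_4'(9)\ge 1/4$ is already witnessed by the $3\times 3$ grid in Figure \ref{fig:p9}, so the content is the matching upper bound, and I would aim for the stronger and affine-invariant statement $\Delta_4''(9)\le 1/4$: every nine points in a convex body of unit area contain four points whose convex hull has area at most $1/4$. The reduction I would lean on throughout is elementary but crucial: \emph{it suffices to trap four of the points inside some convex subregion of area at most $1/4$}, because the convex hull of those four points is contained in that subregion and hence has area at most $1/4$. Thus the target becomes a pure packing statement, with no need to argue convex position separately.

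First I would try to reuse the halving machinery directly. Two successive applications of Observation \ref{cut_two} and Lemma \ref{cut_lemma} shrink the unit body to a convex region of area $1/4$, but the bookkeeping on the points is the issue: each cut keeps the pivot and discards the lighter side, so the count evolves as $m\mapsto \lceil (m-1)/2\rceil+1$, giving $9\mapsto 5\mapsto 3$. Equivalently, the recursion of Theorem \ref{recurse_theorem} requires $n=2^s+2$, and $9$ is not of this form. So the naive argument lands one point short, producing only a triangle on three points inside the $1/4$-region; the entire difficulty is recovering a fourth point.

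Second, I would exploit the slack discarded above. The count collapses to three only if \emph{both} cuts are perfectly balanced, namely a $4$--$4$ split of the eight non-pivot points followed by a $2$--$2$ split of the four survivors; a single unbalanced area-bisector through some point already keeps at least six points after the first cut and at least four after the second, which finishes the proof by the reduction of the first paragraph. Hence I may assume that for \emph{every} pivot the area-bisecting chord through it splits the remaining points as evenly as possible. The plan is to turn this ubiquitous balance into a rigidity statement forcing the nine points to be (an affine image of) the $3\times 3$ grid, which can then be checked by hand. As a parallel route with the same endpoint, I would also try the equipartition approach: cut the body into four equal-area convex cells by two area-bisecting lines and slide or rotate the cut, aiming to show that some equipartition contains a cell with at least four points unless the configuration is again grid-like.

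The step I expect to be the genuine obstacle is exactly this balanced, rigid case, and the reason is the tightness of the grid itself: the extremal configuration sits on every natural partition boundary and defeats every counting-only argument---pigeonhole, area-bisection, or ham-sandwich---by precisely one point. Converting ``every area-bisector splits the points evenly'' into ``the points form the grid'' seems to require a genuine rigidity argument together with a finite, and likely computer-assisted, case analysis over the convex-layer structure of nine points (including the degenerate configurations with collinear triples or points on cell boundaries, which I would treat by a limiting argument). I expect this last mile, rather than the halving setup, to be where a clean proof is still missing---which is why the equality is recorded here only as a conjecture.
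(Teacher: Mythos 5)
The statement you are attempting is recorded in the paper as a \emph{conjecture}: the paper itself establishes only the lower bound $\Delta_4'(9)\geq 1/4$, witnessed by the $3\times 3$ grid of Figure \ref{fig:p9}, and explicitly leaves the upper bound $\Delta_4'(9)\leq 1/4$ open. So there is no proof in the paper to compare against, and your proposal --- which cites the same grid for the lower bound and then outlines, but does not complete, an argument for the upper bound --- ends up in exactly the same position as the paper. Your diagnosis of the obstruction is accurate: two applications of Observation \ref{cut_two} and Lemma \ref{cut_lemma} drive the point count $9\mapsto 5\mapsto 3$, one point short of a quadrilateral inside a region of area $1/4$ (equivalently, $9$ is not of the form $2^s+2$ required by Theorem \ref{recurse_theorem}), and your reduction to the case where every area bisector through every pivot splits the remaining points exactly $4$--$4$, with no point on the cut line, is sound. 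But the rigidity step you then invoke --- ``ubiquitous balance forces the grid'' --- is precisely the missing ingredient, and it is a genuine gap, not a bookkeeping one: nothing in the paper's toolkit (pigeonhole, area bisection, induction on halvings) supplies it, which is exactly why the paper states the result as Conjecture \ref{conj1} rather than a theorem.

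One caveat about your framing of the hard case. The extremal grid does \emph{not} in fact defeat the closed-half counting argument: the area bisector through the corner pivot $(0,0)$ is the main diagonal of the rectangle, which passes through two further grid points, so the closed half contains six points, and a second cut then traps four points in area $1/4$ --- consistent with the grid containing quadrilaterals of area exactly $1/4$ (four corners of adjacent cells). So the configurations your balanced-case analysis must rule out are not the grid itself, nor affine images of it, but hypothetical general-position configurations in which no point ever lies on any bisector and every split is exactly even; showing that such configurations cannot exist (or must contain a small quadrilateral for some other reason) is the actual open problem, and your proposed endpoint for the rigidity argument is aimed at the wrong target. As written, then, your text is a reasonable research program with an honestly acknowledged hole, and it should be read as supporting the conjecture's plausibility rather than proving it.
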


If true, Conjecture \ref{conj1} would directly imply that $\Delta_4(n) \leq \dfrac{2}{n-8}$ from Observation \ref{phpobs} by picking $n'$ to be $9$. We also note that finding the exact values of $\Delta_k{(n)}$ is of independent interest and has been well studied for $k=3$ \cite{comellas2002new, dehbi2022heilbronn, zeng2008heilbronn}. 

One may note that our analysis will extend to general convex figures with unit area (instead of just unit squares), i.e., 
$$\Delta''_4(n) \leq \dfrac{2}{n} + o\left(\dfrac{1}{n}\right)$$

\section{Convex $k$-gons: Proof of Theorem \ref{thm2}}
In this section, we extend our proofs for $k=4$ to give upper bounds on $\Delta_k(n)$ for any constant $k$. 

\begin{proposition}[Analogue of Observation \ref{phpobs}]
\label{phpobs2} For $n'\leq n$,
$$\Delta_k'(n) \leq {\floor{\dfrac{n-1}{n'-1}}}^{-1}\Delta_k'(n')$$
\end{proposition}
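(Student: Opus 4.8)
The plan is to imitate the proof of Observation \ref{phpobs} verbatim, since nothing in that argument used the specific value $k=4$. The earlier proof partitioned the unit rectangle into $\floor{(n-1)/(n'-1)}$ smaller rectangles, each of area $\floor{(n-1)/(n'-1)}^{-1}$, and then invoked the pigeonhole principle to force $n'$ points into one of them. That step is entirely insensitive to $k$: it only counts points, not the size of the convex hulls they determine. So the structure of my proof will be (i) set up the grid partition, (ii) apply pigeonhole to locate a sub-rectangle containing at least $n'$ points, and (iii) apply a scaling argument inside that sub-rectangle to transfer the bound $\Delta_k'(n')$ down to the original scale.

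First I would note that since there are $n$ points and at most $\floor{(n-1)/(n'-1)} \leq (n-1)/(n'-1)$ sub-rectangles, pigeonhole guarantees a sub-rectangle (boundary included) holding at least $n'$ points; here I use the standard fact that distributing $n$ objects into fewer than $(n-1)/(n'-1)$ boxes forces some box to hold at least $n'$. Then, inside that sub-rectangle of area $\floor{(n-1)/(n'-1)}^{-1}$, I would rescale it to unit area. Any $k$ of the $n'$ points determine a convex hull whose area, after rescaling, is at most $\Delta_k'(n')$ for a worst-case configuration; scaling back multiplies areas by the factor $\floor{(n-1)/(n'-1)}^{-1}$, which yields $k$ points whose convex hull has area at most $\floor{(n-1)/(n'-1)}^{-1}\Delta_k'(n')$. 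Taking the supremum over point configurations gives the claimed inequality for $\Delta_k'(n)$.

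One technical point to watch is that a sub-rectangle of the partition is a scaled copy of a rectangle, and $\Delta_k'$ was defined precisely to allow arbitrary aspect ratios $[0,d]\times[0,d^{-1}]$, so the scaling argument stays within the class of objects over which $\Delta_k'(n')$ takes its supremum. This is exactly why the paper phrased the original problem over general unit rectangles rather than unit squares, and it is what makes the recursion clean. The hypothesis $n' \leq n$ (rather than $4 \leq n' \leq n$) is the only visible change from Observation \ref{phpobs}, reflecting that the argument never needed $n' \geq 4$ in the first place — that lower bound was merely to ensure a convex quadrilateral exists.

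The main obstacle, such as it is, is purely notational rather than mathematical: I should make sure the scaling argument is stated for a general convex hull of $k$ points rather than specializing to four points, and confirm that the floor inequality $\floor{(n-1)/(n'-1)} \leq (n-1)/(n'-1)$ is what licenses the pigeonhole step. Since the proof of Observation \ref{phpobs} already isolated the count-versus-area distinction cleanly, I expect this proposition to follow by essentially copying that proof with $4$ replaced by $k$ throughout, and I would present it as such.
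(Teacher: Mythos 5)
Your proposal is correct and is exactly the paper's approach: the paper's proof of Proposition \ref{phpobs2} consists of the single remark that the argument of Observation \ref{phpobs} extends directly, and you have simply spelled out that extension (grid partition, pigeonhole, rescaling within the class of unit-area rectangles), correctly noting that no step depends on $k=4$. Your observation that the definition of $\Delta_k'$ over arbitrary aspect ratios $[0,d]\times[0,d^{-1}]$ is what keeps the scaling argument self-contained matches the paper's reason for working with unit rectangles rather than unit squares.
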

\begin{proof}
The proof of Proposition \ref{phpobs2} directly extends from Observation \ref{phpobs}.
\end{proof}

\begin{theorem}[Analogue of Theorem \ref{recurse_theorem}]
\label{recurse_theorem2}
If $n=2^s(k)-2^{s+1}+2$ for some integer $s>0$, then $$\Delta_k(n) \leq \Delta_k'(n) \leq \dfrac{1}{2^{s}} = \dfrac{k-2}{n-2}$$   
\end{theorem}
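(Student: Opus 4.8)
The plan is to mimic the proof of Theorem \ref{recurse_theorem} essentially verbatim, replacing the special value of the free parameter $\beta$ used in the quadrilateral case by $\beta = k-2$. First I would rewrite the hypothesis in the cleaner equivalent form $n = 2^s(k-2) + 2$, which is algebraically identical to $2^s k - 2^{s+1} + 2$ and immediately yields the stated identity $\frac{1}{2^s} = \frac{k-2}{n-2}$, since $n - 2 = 2^s(k-2)$. Exactly as before, I would prove the stronger bound $\Delta_k''(n) \leq \frac{1}{2^s}$ over all convex objects of unit area; the chain $\Delta_k(n) \leq \Delta_k'(n) \leq \Delta_k''(n)$ established earlier then delivers the theorem.

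The heart of the argument is the repeated application of Lemma \ref{cut_lemma}, which is already stated for a general parameter $\beta$. Taking $\beta = k-2$, I would prove by induction on $i$ the exact analogue of Observation \ref{obsrec}: for every $i \in [0,s]$ there is a convex polygon of area at most $\frac{1}{2^i}$ containing at least $2^{s-i}(k-2) + 2$ points of $\mathcal{Q}$. The base case $i = 0$ holds by definition, since the whole unit-area object contains all $n = 2^s(k-2) + 2$ points. For the inductive step, given such a polygon for some $i \le s-1$, Observation \ref{cut_two} supplies a line through an arbitrarily chosen point that bisects its area, and the pigeonhole count inside Lemma \ref{cut_lemma} (with $2^{s-i}(k-2) + 2$ playing the role of $2^j\beta + 2$ for $j = s-i \ge 1$) returns a convex polygon of area at most $\frac{1}{2^{i+1}}$ containing at least $2^{s-i-1}(k-2) + 2$ points.

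Finally I would substitute $i = s$. The invariant then gives a convex polygon of area at most $\frac{1}{2^s}$ containing at least $2^0(k-2) + 2 = k$ points; the convex hull of those $k$ points lies inside that region, hence has area at most $\frac{1}{2^s}$, so $\Delta_k''(n) \le \frac{1}{2^s}$, which is what we want.

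The only matters requiring care — and the main, rather mild, obstacle — are bookkeeping ones: checking that the offset $+2$ is preserved under halving precisely when $\beta = k-2$, so that the recursion terminates at exactly $k$ points rather than undershooting, and confirming that the pigeonhole rounding in Lemma \ref{cut_lemma} is tight at every step. The latter holds because $2^{s-i}(k-2)$ is even for $i \le s-1$, so $\lceil (2^{s-i}(k-2)+1)/2 \rceil = 2^{s-i-1}(k-2) + 1$, and adding back the chosen point restores the $+2$. No genuinely new geometric idea beyond Observation \ref{cut_two} and Lemma \ref{cut_lemma} is needed; the entire content lies in selecting $\beta = k-2$ and verifying that the induction closes on exactly $k$ points.
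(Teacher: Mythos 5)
Your proposal is correct and follows the paper's proof essentially verbatim: the paper likewise reduces to the stronger bound $\Delta_k''(n) \leq \frac{1}{2^s}$ and proves by induction (its Observation \ref{obsrec2}) that for every $i \in [0,s]$ some convex polygon of area at most $\frac{1}{2^i}$ contains at least $2^{s-i}(k) - 2^{s+1-i} + 2 = 2^{s-i}(k-2)+2$ points, using Lemma \ref{cut_lemma} for the inductive step and setting $i=s$ to finish. Your rewriting of the point count as $2^{s-i}(k-2)+2$ is just a cleaner form of the paper's expression, so the two arguments are identical in substance.
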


\begin{proof}
Let $\mathcal{Q}=\{Q_1,Q_2,\ldots,Q_n\}$ be a set of $n$ points. We will prove a stronger statement of $$\Delta''_k(n) \leq \dfrac{1}{2^{s}} $$

\begin{observation}
\label{obsrec2}
For every $i \in [0,s]$, there exists a convex polygon of area at most $\dfrac{1}{2^{i}}$ with at least $2^{s-i}(k)-2^{s+1-i}+2$ points from $\mathcal{Q}$.
\end{observation}
\begin{proof}
We prove this by induction on $i$. When $i=0$, the claim is true by definition. Suppose there exists a convex polygon of area at most $\dfrac{1}{2^{i}}$ with at least $2^{s-i}(k)-2^{s+1-i}+2$ points from $\mathcal{Q}$, then from Lemma \ref{cut_lemma}, for $i<s$, there exists a convex polygon of area at most $\dfrac{1}{2^{i+1}}$ with at least $2^{s-(i+1)}(k)-2^{s+1-(i+1)}+2$ points from $\mathcal{Q}$.
\end{proof}
By substituting $i=s$ in Observation \ref{obsrec2}, Theorem \ref{recurse_theorem2} follows.
\end{proof}

\begin{corollary}
$\Delta_k(n) \leq \dfrac{k-2}{n}+o\left(\dfrac{1}{n}\right)$    
\end{corollary}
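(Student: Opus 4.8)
The final statement to prove is the Corollary asserting $\Delta_k(n) \leq \frac{k-2}{n} + o(1/n)$ for general $k \geq 4$. This is the $k$-gon analogue of the $k=4$ corollary proved just above, which bootstrapped the sparse values of $n$ handled by Theorem~\ref{recurse_theorem} to all $n$ using the pigeonhole relation in Observation~\ref{phpobs}. So the plan is to imitate that proof almost verbatim, substituting Proposition~\ref{phpobs2} for Observation~\ref{phpobs} and Theorem~\ref{recurse_theorem2} for Theorem~\ref{recurse_theorem}.

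First I would note that Theorem~\ref{recurse_theorem2} only gives the clean bound $\frac{k-2}{n-2}$ for $n$ of the special form $n = 2^s(k) - 2^{s+1} + 2 = 2^s(k-2) + 2$. For an arbitrary $n$, I would locate the largest integer $s$ with $2^s(k-2)+2 \leq n$, so that consecutive such values differ by roughly a factor of two; this guarantees a ``good'' value $n'$ of the required form that is not too far below $n$. To make the additive error genuinely $o(1/n)$ rather than $O(1/n)$, I would follow the earlier corollary's trick of choosing $n'$ around $2^{\lceil s/2\rceil}(k-2)+2$, i.e. of size roughly $\sqrt{n}$ (up to the constant $k-2$), so that $n' = O(\sqrt{n})$.

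The key computation then proceeds as in the $k=4$ case. By Theorem~\ref{recurse_theorem2}, $\Delta_k'(n') \leq \frac{k-2}{n'-2} < \frac{k-2}{n'-1}$. The floor in Proposition~\ref{phpobs2} is bounded below via $\floor{\frac{n-1}{n'-1}} > \frac{n-1}{n'-1} - 1 \geq \frac{n-n'}{n'-1}$, and substituting into Proposition~\ref{phpobs2} gives
\[
\Delta_k'(n) \leq \frac{k-2}{n'-1}\cdot\frac{n'-1}{n-n'} = \frac{k-2}{n-n'} = \frac{k-2}{n} + \frac{(k-2)n'}{n(n-n')}.
\]
Since $n' = O(\sqrt{n})$ and $k$ is a fixed constant, the second term is $O(n^{-3/2}) = o(1/n)$, and the corollary follows from $\Delta_k(n) \leq \Delta_k'(n)$.

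The only place requiring genuine care is verifying that the ``good'' values $n' = 2^t(k-2)+2$ really are spaced by a bounded multiplicative factor, so that choosing the exponent near $s/2$ produces an $n'$ that is simultaneously $O(\sqrt{n})$ and close enough to a valid form; this is where I expect the main (though still routine) obstacle to lie, since the additive constant $2$ and the factor $k-2$ slightly complicate the clean powers-of-two spacing that made the $k=4$ argument transparent. Everything else is a direct transcription of the $k=4$ corollary.
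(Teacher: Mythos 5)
Your proposal is correct and follows essentially the same route as the paper's own proof: both bootstrap Theorem~\ref{recurse_theorem2} (noting $2^s(k)-2^{s+1}+2 = 2^s(k-2)+2$) via Proposition~\ref{phpobs2}, choosing $n'$ of the special form with exponent roughly half that of $n$, so that $n' = O(\sqrt{n})$ and the error term $\frac{(k-2)n'}{n(n-n')}$ becomes $O(n^{-3/2}) = o(1/n)$. The spacing concern you flag is indeed routine, since consecutive admissible values $2^t(k-2)+2$ differ by roughly a factor of two, exactly as in the $k=4$ case.
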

\begin{proof}
For every $n$, there exists some $i$ such that $$2^{i-1}(k)-2^{i}+2 \leq n <2^i(k)-2^{i+1}+2$$
Let $n'=2^{\ceil*{0.5i}}(k)-2^{\ceil*{0.5i}+1}+2$. From Theorem \ref{recurse_theorem2}, $$\Delta_k'(n')\leq \dfrac{k-2}{n'-2} < \dfrac{k-2}{n'-1} $$

$${\floor{\dfrac{n-1}{n'-1}}} > \frac{n-1}{n'-1}-1 \geq \frac{n-n'}{n'-1} $$
From Proposition \ref{phpobs2}, $$\Delta_k'(n) \leq \Delta_k'(n'){\floor{\dfrac{n-1}{n'-1}}}^{-1} \leq \dfrac{k-2}{n'-1}\cdot \frac{n'-1}{n-n'} = \dfrac{k-2}{n-n'} $$
$$=(\dfrac{1}{n} + \dfrac{n'}{n(n-n')})\cdot (k-2) =\dfrac{k-2}{n}+\mathcal{O}\left(\dfrac{1}{n^{1.5}}\right)$$
\end{proof}

\begin{conjecture}
\label{gen_conjecture}
$\Delta_{k}'(\alpha(k-1))=\dfrac{1}{2(\alpha-1)}$ for $\alpha \leq k-1$
\end{conjecture}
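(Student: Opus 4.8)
The plan is to prove the two inequalities $\Delta_k'(\alpha(k-1)) \ge \frac{1}{2(\alpha-1)}$ and $\Delta_k'(\alpha(k-1)) \le \frac{1}{2(\alpha-1)}$ separately, the first by an explicit construction and the second by a cutting argument in the spirit of Observation~\ref{cut_two} and Lemma~\ref{cut_lemma}. Throughout I would exploit the fact that the linear map $(x,y)\mapsto((\alpha-1)x,(k-2)y)$ is area-preserving, so that all hull areas are unchanged under rescaling the aspect ratio; this lets me pass freely between the unit rectangle $[0,d]\times[0,d^{-1}]$, the unit square, and a normalized integer grid, and it explains why $\Delta_k'$ does not depend on $d$. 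The restriction $\alpha\le k-1$ should be understood as the regime in which there are at least as many rows as columns, so that the extremal configuration is built from \emph{full columns} (of height $1$) rather than full rows; for $\alpha>k-1$ one expects the symmetric statement with the two coordinates exchanged.

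For the lower bound I would place the full $\alpha\times(k-1)$ grid: $\alpha$ equally spaced columns, each carrying $k-1$ equally spaced points, for a total of $\alpha(k-1)$ points in the unit square, with horizontal spacing $\frac{1}{\alpha-1}$ and vertical spacing $\frac{1}{k-2}$. The minimizing $k$-subset is a full column ($k-1$ collinear points spanning the entire height $1$) together with one point of a neighbouring column; its hull is a triangle of base $1$ and height $\frac{1}{\alpha-1}$, hence of area exactly $\frac{1}{2(\alpha-1)}$. The content of the lower bound is the claim that \emph{no} $k$-subset does better, which after the area-preserving rescaling becomes the clean lattice statement: any $k$ points with integer coordinates confined to $k-1$ consecutive rows have convex hull area at least $\frac{k-2}{2}$. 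I would prove this lemma either by induction on $k$ (peeling off a point that is alone in the top or bottom occupied row, which lowers both the point count and the available height by one) or by counting lattice points row by row --- each occupied row contains at most $\mathrm{width}+1$ lattice points --- and combining the resulting width estimate with the concavity of the cross-sectional width function of a convex region. Verifying that the two-adjacent-column trapezoid of area $\frac{1}{2(\alpha-1)}$ is genuinely the global minimum, and is not undercut by a balanced multi-column selection, is the only delicate point here, but small cases make me confident the lattice lemma holds and the bound is tight.

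For the upper bound I would first record the case $\alpha=2$, which already generalizes the base value $\Delta_4'(6)=1/2$ of Observation~\ref{6_vertex} to all $k$. Given $2(k-1)$ points in a convex region of unit area, pick any point $P$ and apply Observation~\ref{cut_two} to split the region into two halves of area $1/2$ through $P$; the remaining $2k-3$ points split so that one half receives at least $\lceil (2k-3)/2\rceil=k-1$ of them, and together with $P$ this gives $k$ points inside a convex region of area $1/2$. Hence $\Delta_k''(2(k-1))\le \frac12$, matching the construction. The difficulty is that for $\alpha\ge 3$ the target area $\frac{1}{2(\alpha-1)}$ is strictly smaller than $\frac12$, and the obvious route to it --- iterating the halving of Lemma~\ref{cut_lemma} --- is too wasteful: each halving discards roughly half of the surplus points, so starting from $\alpha(k-1)$ points one does not retain $k$. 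Concretely, for $\alpha=3,\ k=4$ we would need two successive halvings of the $9$ points, but the counts run $9\to 5\to 3$, one short of the required $4$; the same deficit persists for all $\alpha\ge 3$, and when $\alpha-1$ is not a power of two even the dyadic cut sizes are unavailable.

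Overcoming this is the main obstacle, and it is exactly the point at which the problem is currently open (already for $\Delta_4'(9)\le 1/4$, i.e.\ Conjecture~\ref{conj1}). The approach I would pursue is to design cuts that reuse, rather than discard, boundary points: instead of bisecting through a single pivot $P$, cut along a \emph{chord} through two already-selected points so that both endpoints survive in each piece, or first decompose the region into $\alpha-1$ slabs of area $\frac{1}{\alpha-1}$ taken along the direction in which the points are most evenly spread (mirroring the column structure of the extremal grid) and only then bisect the heaviest slab. The crux is to arrange such a scheme so that essentially no points are lost across the successive cuts, so that the surviving count meets the budget $\alpha(k-1)$; this is precisely where the tightness of the column construction must be matched, and I expect it to be the hardest and least routine part of the argument.
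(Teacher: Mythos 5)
This statement is a conjecture that the paper itself leaves open: its only supporting content is exactly the two partial results you establish — the lower bound via the $\alpha\times(k-1)$ grid construction and the case $\alpha=2$ via the halving argument of Lemma \ref{cut_lemma} — and your identification of the upper bound for $\alpha\ge 3$ (already Conjecture \ref{conj1} for $k=4$) as the open crux matches the paper's status precisely. One simplification for your grid lower bound: instead of induction or row counting, Pick's theorem gives it immediately, since $k$ non-collinear lattice points have hull area at least $\tfrac{k}{2}-1=\tfrac{k-2}{2}$, and the hypothesis $\alpha\le k-1$ is exactly what guarantees no $k$ grid points are collinear (this, rather than a rows-versus-columns symmetry, is the real role of that restriction, as the paper notes).
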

One can prove that $\Delta_{k}'(\alpha(k-1)) \geq \dfrac{1}{2(\alpha-1)}$ by placing the points in the corners of a $\alpha \times (k-1)$ grid. We conjecture that this is, in fact, the optimal placement. When $\alpha=2$, this is indeed true by Lemma \ref{cut_lemma}. Conjecture \ref{conj1} is a special case of Conjecture \ref{gen_conjecture} when $k=4$. If true, Conjecture \ref{gen_conjecture} would imply an upper bound of $\approx {k}/{(2n)}$

One may notice that Conjecture \ref{gen_conjecture} can not be extended to $\alpha \leq k-1$, as at least $k$ points become collinear in that case.

\section*{Acknowledgements}
RG thanks Endre Szemer{\'e}di for introducing this problem to him during the $9$th Heidelberg Laureate Forum. RG thanks Saladi Rahul for his comments on the manuscript.





\BibTexMode{%
   \bibliographystyle{abbrv}
   \bibliography{template.bib}
}%
\BibLatexMode{\printbibliography}

\end{document}